\newcommand{\ccP}{\mathcal{P}}
\newcommand{\LE}{\text{LE}}
\newcommand{\jump}{\mathrm{jump}}
\newcommand{\bump}{\mathrm{bump}}
\title{Exact exponential algorithms for two poset problems} %
\titlerunning{Exact exponential algorithms for two poset problems} %
\author{L\'{a}szl\'{o} Kozma}{Freie Universit\"at Berlin, Institute of Computer Science, Germany}{laszlo.kozma@fu-berlin.de}{}{Research supported by DFG grant KO 6140/1-1.}%
\authorrunning{L. Kozma} %
\keywords{poset, linear extension, jump number, exponential time} %
\begin{document}

\maketitle

\begin{abstract}
Partially ordered sets (posets) are fundamental combinatorial objects with important applications in computer science. Perhaps the most natural algorithmic task, given a size-$n$ poset, is to compute its number of \emph{linear extensions}. In 1991 Brightwell and Winkler showed this problem to be $\#\P$-hard. In spite of extensive research, the fastest known algorithm is still the straightforward $O(n 2^n)$-time dynamic programming (an adaptation of the Bellman-Held-Karp algorithm for the TSP). Very recently, Dittmer and Pak showed that the problem remains $\#\P$-hard for \emph{two-dimensional} posets, and no algorithm was known to break the $2^n$-barrier even in this special case.  
The question of whether the two-dimensional problem is easier than the general case was raised decades ago by M\"ohring, Felsner and Wernisch, and others. In this paper we show that the number of linear extensions of a two-dimensional poset can be computed in time $O(1.8172^n)$. 

The related \emph{jump number problem} asks for a linear extension of a poset, minimizing the number of neighboring \emph{incomparable} pairs. The problem has applications in scheduling, and has been widely studied. In 1981 Pulleyblank showed it to be \NP-complete. We show that the jump number problem can be solved (in arbitrary posets) in time $O(1.824^n)$. This improves (slightly) the previous best bound of Kratsch and Kratsch. 
\end{abstract}

\section{Introduction}
\label{sec:introduction}
A \emph{partially ordered set} (\emph{poset}) $\ccP = (X,\prec)$ consists of a ground set $X$ and an {irreflexive} and {transitive} binary relation $\prec$ on $X$. 
A \emph{linear extension} of $\ccP$ is a total order on $X$ that contains $\prec$. The main problem considered in this paper is to determine, given a poset $\ccP$ on a ground set of size $n$, the number of linear extensions $\LE(\ccP)$ of $\ccP$. We refer to this counting problem as \#LE. A poset can alternatively be seen as a transitive directed acyclic graph (DAG), where \#LE asks for the number of topological orderings of the graph.

Posets are fundamental objects in combinatorics (for a detailed treatment we refer to the monographs~\cite{Trotter_book, Rival_book}, \cite[§\,3]{Stanley_book}, \cite[§\,8]{Handbook_comb}) with several applications in computer science. For instance, every comparison-based algorithm (e.g.\ for sorting) implicitly defines a sequence of posets on the input elements, where each poset captures the pairwise comparisons known to the algorithm at a given time. 
An efficient sorter must find comparisons whose outcomes split the number of linear extensions in a balanced way. A central and long-standing open question in this area is whether a comparison with ratio (at worst) $1/3:2/3$ exists in every poset~\cite{Brightwell_survey}; slightly weaker constant ratios are known to be achievable~\cite{KahnSaks,Brightwell1995}.

Counting linear extensions (exactly or approximately) is a bottleneck in experimental work, e.g.\ when testing combinatorial conjectures. In computer science the \#LE problem is relevant, besides the mentioned task of optimal comparison-based sorting,  
for learning graphical models~\cite{Wallace, Niinimaki}, probabilistic ranking~\cite{Winkler_avg, Fishburn, Lukasiewicz}, reconstruction of partial orders from sequential data~\cite{MannilaMeek}, convex rank tests~\cite{Morton}, multimedia delivery in networks~\cite{Amer}, and others.

The complexity of \#LE has been thoroughly studied (see Linial~\cite{Linial} for an early reference). Lov\'asz~\cite[§\,2.4]{Lovasz} mentions the problem as a special case of polytope volume computation; Stanley~\cite{Stanley_polytope} gives a broad overview of the polytope-formulation of \#LE. Brightwell and Winkler~\cite{Brightwell1991} show that \#LE is $\# \P$-hard, and thus unlikely to admit a polynomial-time solution. In fact, despite the significant attention the problem has received (e.g.\ the mentioned papers and references therein and thereof), the best upper bound on the running time remains $O(n2^n)$. This bound can be achieved via dynamic programming over the subsets of the ground set~\cite{KSZ, DeLoof}, an approach\footnote{A finer bound on the running time is $O(w \cdot |I|)$, where $w$ is the \emph{width} of the poset, and $I$ is its set of \emph{ideals} (i.e.\ downsets); in the worst case, however, this expression does not improve the given bound. } that closely resembles the Bellman-Held-Karp algorithm for the \emph{traveling salesman problem} (\emph{TSP})~\cite{Bellman1962,HeldKarp1962}.

A bound of $2^n$ appears to be a natural barrier\footnote{We only study exact algorithms in this paper; for \emph{approximating} $\LE(\ccP)$, fully polynomial-time randomized schemes are known~\cite{Dyer, Bubley}.} for the running time of \#LE, similarly to some of the most prominent combinatorial optimization problems (e.g.\ \emph{set cover}/\emph{hitting set}, \emph{CNF-SAT}, \emph{graph coloring}, \emph{TSP}).\footnote{The \emph{strong exponential time hypothesis}~\cite{IPZ} states that a running time $O(c^n)$ with $c<2$ is not achievable for CNF-SAT, and a similar barrier has been conjectured for set cover~\cite{SeCoCo}.}  
We show that \#LE can be solved faster when the input poset is \emph{two-dimensional}. 
Dimension is perhaps the most natural complexity-measure of posets, and can be seen informally as a measure of the \emph{nonlinearity} of a poset (see e.g.\ Trotter~\cite{Trotter_book}). 
As one-dimensional posets are simply total orders, the first nontrivial case is dimension two. The structure of two-dimensional posets is, however, far from trivial. Posets in this class capture the \emph{point-domination order} in the plane.  

The question of the complexity of \#LE in two-dimensional posets was raised in the 1980s by M\"ohring~\cite{Mohring} and later by Felsner and Wernisch~\cite{FelsnerWernisch}. An even earlier mention of the problem is by Atkinson, Habib, and Urrutia, in a discussion of open problems concerning posets, cf.\ Rival~\cite[p.\ 481]{Rival_book}.

Efficient algorithms for \#LE are known for various restricted classes of posets, e.g.\ \emph{series-parallel}~\cite{Mohring}, \emph{low treewidth}~\cite{LE_tw, LE_sparse, LE_treewidth3}, \emph{small width}~\cite{DeLoof}, avoiding certain \emph{substructures}~\cite{LE_nfree}, and others; see M\"ohring~\cite{Mohring} for an early survey of tractable special cases. However, as the techniques used in these works rely on certain kinds of \emph{sparsity} in the input, they are not applicable for the case of two-dimensional posets. It is easy to see that the latter may be arbitrarily dense, containing, for example, a complete bipartite graph of linear size. 
In fact, Dittmer and Pak~\cite{DittmerPak} recently showed that \#LE is $\#\P$-hard already for this class of inputs. 
Our first result is stated in the following theorem.

\begin{theorem}\label{thm1}
The number of linear extensions of a two-dimensional poset of size $n$ can be computed in time $O(1.8172^n)$. 
\end{theorem}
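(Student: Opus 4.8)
A poset is two-dimensional if and only if it is the intersection of two linear orders, equivalently if its ground set can be identified with points in the plane under the coordinatewise (dominance) order. So write each element $x \in X$ as a pair $(a(x), b(x))$, where $a$ and $b$ are two linear orders whose intersection is $\prec$; we may assume $a$ and $b$ are permutations of $[n]$. A linear extension of $\ccP$ is then a permutation that is consistent with both $a$ and $b$. The key combinatorial fact I would use is that the comparability graph of a two-dimensional poset is a \emph{permutation graph}, and permutation graphs have many small \emph{independent sets / cliques} structurally constrained — but more to the point, the standard DP for \#LE processes the poset by building up an \emph{ideal} (downset) one minimal element at a time, and the number of ideals we ever need to enumerate is what governs the running time. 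For a two-dimensional poset, an ideal is completely determined by its ``staircase'' boundary in the plane: it is a downward-closed region, so it corresponds to an \emph{antichain} of at most $w$ elements (the maximal elements of the ideal), where $w$ is the width. Thus the DP table has size at most (number of antichains), and the whole difficulty reduces to: (i) bounding the number of antichains of a two-dimensional poset on $n$ elements, and (ii) making sure each DP step costs only polynomial overhead.

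**The main technical step is to bound the number of antichains.** In a two-dimensional poset, an antichain is a set of points no two of which are comparable, i.e.\ a set of points forming a strictly decreasing sequence in one of the two coordinates — equivalently, an independent set in the associated permutation graph. So I need a bound of the form: every permutation graph on $n$ vertices has at most $c^n$ independent sets, with $c$ as small as possible; the target $c \approx 1.8172$ suggests $c$ is the largest root of some low-degree polynomial (plausibly $c^3 = c^2 + 1$ or $c^4 = c^3 + c + 1$ or similar, coming from a branching/transfer-matrix recursion on the permutation structure). The natural approach is a \emph{branching argument}: pick a ``pivot'' element, say the one that is the bottom element in the $b$-order (smallest second coordinate, or a carefully chosen median-like element), and branch on whether it is in the antichain or not; if it is in the antichain, then every element below it in the $a$-order is excluded, which should kill a constant fraction of the remaining elements on average, yielding the desired recurrence $T(n) \le T(n-1) + T(n-k)$ for a suitable $k$. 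Getting the constant down to $1.8172$ rather than the trivial $2$ will require a more careful case analysis — probably splitting on the position of the pivot in both orders and using the fact that in at least one of several sub-cases many elements get eliminated. Alternatively one might count antichains via a transfer-matrix / profile-DP along one coordinate, where the ``profile'' is the set of already-chosen antichain elements still relevant, and bound the number of profiles.

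**Assembling the algorithm.** Given the antichain bound, the algorithm is: enumerate all ideals of $\ccP$ (equivalently all antichains, via the maximal-elements correspondence) in order of size using a DP, where for an ideal $I$ we set $f(I) = \sum f(I \setminus \{x\})$ over maximal elements $x$ of $I$; the answer is $f(X)$. Each ideal has at most $n$ maximal elements and the predecessor ideals are easily computed, so the per-ideal cost is polynomial. The running time is therefore $\tilde{O}((\text{number of antichains}))$, and if the antichain bound is $O(1.8172^n)$ we are done (the polynomial factor is absorbed, since $1.8172 < 1.8173$, say, or one states the bound with a tiny slack). One subtlety: we must be able to \emph{compute} a two-dimensional realization $(a, b)$ of the input poset efficiently — this is a classical polynomial-time task (dimension-$2$ recognition via the incomparability graph being a permutation graph, or via transitive orientation), so it is not an obstacle. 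Another subtlety: enumerating the ideals without re-deriving each one exponentially many times — this is handled by standard techniques (store the DP table indexed by a canonical encoding of the antichain of maximal elements, e.g.\ as a sorted tuple of at most $w$ coordinates).

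**The hard part will be the antichain-counting bound**, i.e.\ proving that a permutation graph (or, equivalently, a two-dimensional poset) on $n$ vertices has at most $O(1.8172^n)$ independent sets, together with choosing the pivot/branching rule that achieves exactly this constant. Everything else — the realization computation, the ideal DP, the polynomial bookkeeping — is routine. I would spend the bulk of the effort designing the branching so that the worst-case recurrence solves to the stated base, likely via a measure-and-conquer style weighting if a naive branching gives only something like $1.85^n$ or $1.89^n$.
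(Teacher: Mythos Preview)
Your approach has a genuine gap: the central bound you want to prove is false. You propose to show that every two-dimensional poset (equivalently, every permutation graph) on $n$ elements has at most $O(1.8172^n)$ antichains (independent sets), and then run the standard downset DP. But take the reverse permutation: the resulting poset is a single antichain, its comparability graph is edgeless, and every one of the $2^n$ subsets is an antichain (and a downset). So no bound of the form $c^n$ with $c<2$ can hold on the number of antichains of an arbitrary two-dimensional poset, and a branching argument of the kind you sketch cannot possibly yield $1.8172^n$; in the worst case there is nothing to branch on.

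What the paper actually does is quite different. It never bounds the number of downsets of $\ccP$ itself. Instead it finds a maximum matching $M$ in the comparability graph; the unmatched vertices form an antichain $A$. The key structural lemma, specific to dimension two, is that the elements of $A$ fall into at most $O(|X\setminus A|)$ equivalence classes under ``same neighborhood in $X\setminus A$''. Elements in the same class are interchangeable in any linear extension, so one replaces each class $A_i$ by a chain of the same length to obtain a new poset $\ccP'$ with $\LE(\ccP)=\LE(\ccP')\cdot\prod_i (|A_i|!)$. It is the number of downsets of the \emph{transformed} poset $\ccP'$ that is bounded by $c^n$ with $c<2$, because the large antichain has been converted into a small number of chains. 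Further refinements (canonical matchings, packing triplets and quartets) push the base down to $1.8172$. In your degenerate example $\ccP'$ becomes a single chain and the answer is $n!$, computed instantly.

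So the missing idea is not a clever branching on antichains, but the neighborhood-class partition of a large antichain and the resulting poset transformation; this is where two-dimensionality is actually used.
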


Our second result is an algorithm for the \emph{jump number problem}. In this (optimization) problem a linear extension of $\ccP$ is sought, such as to minimize the number of adjacent pairs of elements that are incomparable in $\ccP$ (such pairs are called \emph{jumps}). The problem is known to be \NP-hard~\cite{Pulleyblank}, and has been well-studied due to its applications in scheduling. 

Similarly to \#LE, the jump number problem can be solved by dynamic programming in time $2^n n^{O(1)}$. An improved algorithm with running time $O(1.8638^n)$ was given by Kratsch and Kratsch~\cite{jumpnr}. We also refer to their paper for further background and motivation for the problem. Improving the bound of Kratsch and Kratsch, we obtain the following result. 

\begin{theorem}\label{thm3}
The jump number problem can be solved in time $O(1.824^n)$. 
\end{theorem}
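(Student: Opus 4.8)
The plan is to speed up the textbook $O^*(2^n)$ dynamic program over order ideals (the star suppressing polynomial factors) by a dichotomy on the width $w$ of $\ccP$. Recall the standard algorithm, the poset analogue of Bellman--Held--Karp: it builds a linear extension from left to right, with states the pairs $(I,x)$ where $I$ is an order ideal of $\ccP$ and $x$ a maximal element of $I$, storing the minimum number of jumps over linear extensions of $\ccP|_I$ ending in $x$; a transition appends a minimal element $y$ of $X\setminus I$, adding $1$ to the jump count iff $y$ is incomparable to $x$. Its running time is $O^*(N(\ccP))$, where $N(\ccP)$ is the number of order ideals of $\ccP$. (Equivalently, and conveniently for the analysis, the jump number is one less than the minimum number of chains over all partitions of $\ccP$ into chains whose quotient digraph is acyclic.) The obstacle is that $N(\ccP)$ can be $2^n$ when $\ccP$ is close to an antichain.

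For posets of moderate width I would bound $N(\ccP)$ directly: by Dilworth's theorem $\ccP$ is covered by $w$ chains of sizes $\ell_1,\dots,\ell_w$ with $\sum_i\ell_i=n$, and every ideal $I$ is determined by the initial segments $I\cap C_i$, so $N(\ccP)\le\prod_i(\ell_i+1)\le(1+n/w)^w$ by AM--GM. Since $\beta\mapsto\beta\ln(1+1/\beta)$ is increasing, $(1+n/w)^w\le 1.824^n$ for all $w\le\beta^*n$, where $\beta^*\approx 0.63$ solves $\beta\ln(1+1/\beta)=\ln 1.824$; so in this regime the plain ideal-DP already beats the target.

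When $w>\beta^*n$, fix a maximum antichain $A$ (so $|A|=w$) and let $B=X\setminus A$, with $|B|=n-w<(1-\beta^*)n\approx 0.37n$. The structural point is that every chain of $\ccP$ meets $A$ in at most one element, so in an optimal partition almost all elements of $A$ become singleton chains and every nontrivial decision --- which element of $A$, if any, to splice into a given chain of $B$-elements, and where --- concerns $B$ only. I would run a dynamic program whose state carries only $B$-data: roughly a nested pair $I_B\subseteq I_B'$ of order ideals of $\ccP|_B$ (the $B$-elements already placed, and those all of whose $A$-predecessors are placed), the last placed element, and $O(1)$ counters (e.g.\ the number of $A$-elements used so far), reconstructing the contribution of the $A$-elements on the fly. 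This should run in $O^*(c^{n-w})$ for a small constant $c$; since $c^{n-w}<c^{(1-\beta^*)n}$, it stays below $1.824^n$ provided $c$ is at most roughly $1.824^{1/(1-\beta^*)}\approx 5$, which the design should afford. Together the two cases give the $O(1.824^n)$ bound, the slack relative to the ideal-count bound alone coming from the overhead of the antichain bookkeeping.

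The hard part is this second dynamic program. Tracking which elements of $A$ have been used would, together with the unavoidable $\ccP|_B$-ideal part of the state, cost $2^{|A|}\cdot 2^{|B|}=2^n$; to reach $O^*(c^{n-w})$ one must keep a state exponential only in $|B|$, exploiting that the elements of $A$ are pairwise incomparable --- and that two of them with equal down-set and up-set in $B$ are interchangeable --- to aggregate them. The delicate issues are (i) counting jumps exactly from the aggregated state: correctly charging the jumps among several $A$-elements that land in one gap of the linear extension, and the bumps saved by attaching an $A$-element to a $B$-chain, without following individual $A$-elements; and (ii) checking that the left-to-right construction never needs to backtrack --- the acyclic-quotient (realizability) condition is maintained automatically because we literally extend a valid linear extension one element at a time.
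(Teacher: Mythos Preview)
Your small-width branch is fine: by Dilworth the number of ideals is at most $(1+n/w)^w$, and this is below $1.824^n$ once $w\le\beta^\ast n$ with $\beta^\ast\approx 0.63$. The large-width branch, however, is not a proof but a target running time; you label it ``the hard part'' and leave the ``delicate issues'' open. The concrete obstruction is that the state $(I_B,I_B',\text{last},O(1)\text{ counters})$ does not determine its own transitions. After placing an $A$-element you must update $I_B'=\{b\in B:\operatorname{down}_A(b)\subseteq S\}$ from the set $S$ of $A$-elements already placed, but many different $S$ of the same cardinality induce the same $I_B'$ yet diverge after the next placement---the state has forgotten which $A$-elements remain. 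Your fallback, aggregating $A$-elements with equal $(\operatorname{down}_B,\operatorname{up}_B)$ profile, helps only when there are few profiles; that is exactly what Lemma~\ref{lem:struct2} guarantees in dimension~$2$, but Theorem~\ref{thm3} concerns \emph{arbitrary} posets, where the bipartite comparability pattern between $A$ and $B$ is unconstrained and all $|A|$ profiles can be distinct. Tracking a counter per profile then costs up to $\prod_i(a_i+1)=2^{|A|}$, and you are back to~$2^n$.

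The paper avoids compressing $A$ altogether and instead refines the Kratsch--Kratsch matching argument. From a maximum matching $M$ of $\C(\ccP)$ of size $\upalpha n$ it runs a second bipartite matching between the antichain $A=X\setminus W$ and the edges of $M$, producing $\upbeta n$ connected triplets and a residual antichain $A'$ of size $(1-2\upalpha-\upbeta)n$. The key structural fact is that in any optimal extension at most $\upbeta n$ elements of $A'$ can lie in nontrivial chains, since each such element must bump against a distinct triplet. Hence it suffices to run the ideal-DP over downsets meeting $A'$ in at most $\upbeta n$ points, of which there are at most $3^{(\upalpha-\upbeta)n}\cdot 5^{\upbeta n}\cdot\binom{(1-2\upalpha-\upbeta)n}{\le\upbeta n}$; optimising over $0<\upbeta\le\upalpha<\tfrac13$ yields the base $1.824$.
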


Note that in this case no assumption is made on the dimension of the input poset.  
Whether \emph{jump number} remains \NP-hard in two-dimensional posets is a long-standing open question~\cite{Mohring, Ceroi, Steiner87}.

\subparagraph*{Poset dimension.}
Formally, the dimension $\dim(\ccP)$ of a poset $\ccP = (X,\prec)$ is the smallest number $d$ of total orders, whose intersection is $\ccP$. In other words, if $\dim(\ccP)=d$, then there exists a collection of orders $<_1, \dots, <_d$ (called \emph{realizers} of $\ccP$), such that for all $x,y \in X$, we have $x \prec y$ if and only if $x <_k y$ for all $1 \leq k \leq d$. 

Poset dimension was introduced by Dushnik and Miller in 1941~\cite{DM}, and the concept has since been extensively studied; we refer to the monograph of Trotter dedicated to poset dimension theory~\cite{Trotter_book}. Various kinds of \emph{sparsity} of $\ccP$ are known to imply upper bounds on $\dim(\ccP)$ (see e.g.\ \cite{Joret2018, ScottWood} for recent results in a long line of such works). The converse is, in general, not true, as two-dimensional posets may already be arbitrarily dense, and are known not to have a characterisation in terms of finitely many forbidden substructures~\cite{Baker1970, Mohring}.

The term \emph{dimension} is motivated by the following natural geometric interpretation. Suppose $\ccP$ is a $d$-dimensional, size-$n$ poset with realizers $<_1, \dots, <_d$. The ground set can then be viewed as a set of $n$ points in $d$-dimensional Euclidean space, with no two points aligned on any coordinate, such that the ordering of the points according to the $k$-th coordinate coincides with the order $<_k$, for all $1 \leq k \leq d$. The partial order $\prec$ is then exactly the \emph{point-domination order}, i.e.\ $x \prec y$ if and only if all $d$ coordinates of $y$ are larger than the corresponding coordinates of $x$. 
In this geometric view,  
a linear extension of a low-dimensional poset can be seen as a tour that visits all points, never moving behind the \emph{Pareto front} of the already visited points.

Two-dimensional posets are particularly natural, as they are in bijection with permutations (the ranks of points by $<_1$ and $<_2$ can be seen respectively as the \emph{index} and \emph{value} of a permutation-entry). Swapping the two coordinates yields a \emph{dual} poset, turning chains into antichains and vice versa. It follows that the complement of the \emph{comparability graph} is itself a comparability graph, which is yet another exact characterization of two-dimensional posets.\footnote{Given a poset $\ccP = (X,\prec)$, its \emph{comparability graph} is $\C(\ccP) = (X,E)$, where $\{x,y\} \in E$ if $x \prec y$ or $y \prec x$. 
The \emph{width} of $\ccP$ is the size of the largest \emph{antichain} in $\ccP$, i.e.\ independent set in $\C(\ccP)$, and the \emph{height} of $\ccP$ is one less than the size of the largest \emph{chain} in $\ccP$, i.e.\ clique in $\C(\ccP)$. 
} It is not hard to see that two-dimensional posets are exactly the \emph{inclusion-posets} of intervals on a line.

Yet another interpretation of two-dimensional posets relates them to the weak Bruhat order on permutations. In this setting the number of linear extensions of a two-dimensional poset equals the number of permutations that are \emph{reachable} from a given permutation $\pi$ by a sequence of swaps between mis-sorted adjacent elements; a question of independent interest~\cite{FelsnerWernisch, DittmerPak}.

\section{Counting linear extensions in two-dimensional posets}
\label{sec2}

Denote $[k] = \{1,\dots,k\}$. For a set $Y$ with partial order $\prec$, let $\max{(Y)}$ denote the \emph{set of maxima} of $Y$, i.e.\ the set of elements $x \in Y$ with the property that $x \prec y$ implies $y \notin Y$.

Let $\ccP = (X,\prec)$ be a size-$n$ poset. To introduce the main elements of our \#LE algorithm, we review first the classical $O(n 2^n)$ time algorithm. 

For all $Y \subseteq X$, let $\LE(Y)$ denote the number of linear extensions of the subposet of $\ccP$ induced by $Y$, and let $\LE(\emptyset) = 1$. We recursively express $\LE(Y)$ for all nonempty $Y$, by removing in turn all elements that can appear at the end of a total order on $Y$: 
\begin{align}
\LE(Y) = \sum_{x \in \max{(Y)} } {\LE\left(Y \setminus \{x\}\right)}.\label{rec1}
\end{align}

To compute $\LE(\ccP) = \LE(X)$, we evaluate recurrence~(\ref{rec1}), saving all intermediate entries $\LE(Y)$ for $Y \subseteq X$. There are at most $2^n$ such entries, and computing each takes $O(n)$ time, once the results of the recursive calls are available. (With simple bookkeeping, $\max{(Y)}$ is available for all calls without additional overhead.)

\subsection{A first improvement}

A well-known observation is that when computing $\LE(X)$ by (\ref{rec1}), only those subproblems $Y \subseteq X$ arise where $y \in Y$ and $x \prec y$ imply $x \in Y$, i.e.\ the \emph{downsets} of $\ccP$. In general, the number of downsets can be as high as $2^n$, when $\ccP$ consists of a single antichain. Nonetheless, we can give better bounds on the number of downsets, if necessary, by modifying the input poset $\ccP$.

\subparagraph*{Large matching case.}

An observation already made in previous works (e.g.\ \cite{jumpnr}) is the following. Consider a size-$m$ \emph{matching} $M$ in the comparability graph $\C(\ccP)$, with matched edges $\{x_i,y_i\}$, where $x_i \prec y_i$, for all $i \in [m]$. Let $W$ denote the set of vertices matched by $M$ and let $A = X \setminus W$.

Then, the sets $Y \subseteq X$ where $Y \cap \{x_i,y_i\} = \{y_i\}$ for some $i \in [m]$ are not downsets and cannot be reached by recursive calls. The remaining sets can be partitioned as $T_0 \cup T_1 \cup \dots \cup T_m$, where $T_0 \subseteq A$ is an arbitrary subset of the unmatched vertices, and $T_i \in \{\emptyset, \{x_i\},\{x_i,y_i\}\}$ for $i \in [m]$. %

The number of sets of this form is %
$2^{n-2m} \cdot 3^m$. If $m = \upalpha n$, this quantity equals 
$(2 \cdot (\frac{3}{4})^\upalpha)^n$.
When $\upalpha \geq \frac{1}{3}$, the number of subproblems is thus less than $1.8172^n$, and the running time is within the required bounds.

\subparagraph*{Small matching case.}

Let us assume from now on that $M$ is a \emph{maximum} matching of size $m = \upalpha n$ for $\upalpha < \frac{1}{3}$. %
The maximality of $M$ implies that the unmatched vertices $A$ form an independent set in $\C(\ccP)$, i.e.\ an antichain of $\ccP$, of size $|A| = (1-2\upalpha) n$. We assume $\upalpha > 0$, as otherwise $\ccP$ is a single antichain and the problem is trivial.

For $x \in A$, let $N(x)$ denote the \emph{open neighborhood} of $x$ in $\C(\ccP)$, i.e.\ the set of elements in $X$ that are comparable with $x$. Observe that $N(x) \cap A = \emptyset$ for all $x \in A$. 

If $N(x) \cap A = \emptyset$ for an element $x \in W$, then we say that $x$ is \emph{incomparable} with $A$.
Otherwise, if $x \prec y$ for some $y \in N(x) \cap A$, we say that $x$ is \emph{below} $A$, and if $y \prec x$, for some $y \in N(x) \cap A$, we say that $x$ is \emph{above} $A$.
Observe that $x$ cannot be both below and above $A$, as that would make two elements of $A$ comparable, contradicting the fact that $A$ is an antichain.

The sets $N(x)$ define a \emph{partition} of $A$, where $x,y \in A$ are in the same class if and only if $N(x) = N(y)$. In general posets there can be as many as $2^{n-|A|}$ classes. The following lemma states that in two-dimensional posets the number of classes is much smaller.

\begin{lemma}\label{lem:struct2}
Let $\ccP = (X,\prec)$ be a size-$n$ poset, with $\dim(\ccP) \leq 2$, and let $A \subseteq X$ be an antichain. Then, $N(\cdot)$ partitions $A$ into at most $2(n-|A|)$ classes. 
\end{lemma}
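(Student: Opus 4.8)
The plan is to exploit the two-dimensional realizer $<_1, <_2$ of $\ccP$, which we think of as placing the ground set $X$ as $n$ points in the plane, with $x \prec y$ iff $x$ is dominated by $y$ in both coordinates. First I would fix the antichain $A$. Since $A$ is an antichain, its points are totally ordered by $<_1$ (and reverse-ordered by $<_2$): list them as $a_1 <_1 a_2 <_1 \cdots <_1 a_t$ with $t = |A|$, so that $a_1 >_2 a_2 >_2 \cdots >_2 a_t$. The key observation is that for any element $z \in X \setminus A$, the set $\{\,a \in A : z \prec a\,\}$ (the part of $A$ that $z$ lies below) is exactly the suffix of the list $a_j, a_{j+1}, \dots, a_t$ for some threshold $j$ determined by $z$'s position — it is a downward-closed set in the linear order that $<_1$ induces on $A$ — and symmetrically $\{\,a \in A : a \prec z\,\}$ is a prefix $a_1, \dots, a_k$. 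This is the geometric heart of the argument: in the plane, the points of $A$ below-and-right of $z$ form a contiguous block of the $<_1$-sorted list of $A$, and likewise for the points above-and-left.

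Next I would translate this into a bound on $N(\cdot)$-classes. For $a \in A$, its neighborhood $N(a)$ in $\C(\ccP)$ is a subset of $X \setminus A$ (as noted in the text). Write $N(a) = B(a) \cup U(a)$, where $B(a) = \{ z : z \prec a\}$ and $U(a) = \{ z : a \prec z \}$. By the previous paragraph, each $z \in X \setminus A$ contributes $z$ to $B(a)$ for exactly a suffix of the $<_1$-sorted $A$ (equivalently, there is a threshold index in $\{0,1,\dots,t\}$ above which $z \in B(a)$), and $z$ contributes to $U(a)$ for exactly a prefix. Consequently, as we scan $a_1, a_2, \dots, a_t$ in $<_1$-order, the set $B(a_i)$ only \emph{gains} elements (it is monotone nondecreasing in $i$), and $U(a_i)$ only \emph{loses} elements (monotone nonincreasing). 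Each of the $n - t = n - |A|$ elements of $X \setminus A$ can cause at most one change to the $B$-sequence (one "insertion" at its threshold) and at most one change to the $U$-sequence (one "deletion"). Hence along the chain $a_1, \dots, a_t$ the pair $\bigl(B(a_i), U(a_i)\bigr)$ — and therefore $N(a_i)$ — changes value at most $2(n-|A|)$ times, so it takes at most $2(n-|A|) + 1$ distinct values; consecutive $a_i$ with equal $N(\cdot)$ lie in one class, giving at most $2(n-|A|)+1$ classes. A small extra argument shaves the additive $+1$: a class of $A$-elements with \emph{empty} neighborhood would have to be handled separately, but if $N(a) = \emptyset$ then $a$ is isolated and can be removed (or: one checks that the very first segment before any change is either empty — handled trivially — or already "charged", so the count is at most $2(n-|A|)$); alternatively observe the monotone $B$-sequence starts at $\emptyset$ and the monotone $U$-sequence ends at $\emptyset$, and argue the endpoints don't both need a fresh charge. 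Either way we reach the stated bound $2(n-|A|)$.

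The main obstacle I anticipate is making the geometric claim — that $\{a \in A : z \prec a\}$ is a $<_1$-contiguous suffix of $A$ — fully rigorous and correctly oriented with respect to the realizer conventions. Concretely: if $z <_1 a$ and $z <_1 a'$ with $a <_1 a'$, one must show $z <_2 a$ whenever $z <_2 a'$; this uses that $a, a'$ are incomparable (so $a >_2 a'$, since $a <_1 a'$) together with transitivity of $<_2$. Getting these inequalities pointing the right way, and handling the dual statement for $U(a)$ symmetrically, is the only delicate part; once it is in place the counting is a routine monotonicity-and-charging argument as sketched. I would also double-check the edge cases where $A$ is empty or $|A| = n$ (then $\ccP$ is an antichain, the bound reads $\leq 0$, and indeed there are no elements of $X \setminus A$ to distinguish classes — the single class is vacuous or the statement is interpreted trivially), and where some element of $X\setminus A$ is incomparable with all of $A$, which simply means it never triggers a change and can be ignored in the charging.
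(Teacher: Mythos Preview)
Your overall strategy matches the paper's: sort the antichain $A$ by $<_1$, argue that each $y \in X\setminus A$ interacts with $A$ in a controlled way, and count breakpoints. But your central structural claim is wrong. You assert that $\{a\in A: z\prec a\}$ is a \emph{suffix} of the $<_1$-sorted list of $A$ (and dually that $\{a\in A: a\prec z\}$ is a prefix). In fact each of these sets is only an \emph{interval}. Take $A=\{a_1,a_2,a_3\}$ at coordinates $(1,3),(2,2),(3,1)$ and $z=(1.5,1.5)$: then $\{a: z\prec a\}=\{a_2\}$, neither a prefix nor a suffix. Your own justification already exposes the gap: you deduce $z<_2 a$ from $z<_2 a'$ only under the extra hypothesis $z<_1 a$, which need not hold when $a<_1 z <_1 a'$. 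Consequently the monotonicity of $B(a_i)$ and $U(a_i)$ along the chain is false, and your ``one insertion, one deletion per $z$'' charging collapses.

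The fix is exactly what the paper does. For each $y\in X\setminus A$ the set $N(y)\cap A$ is a contiguous interval $\{z_i,\dots,z_j\}$ in the $<_1$-order, contributing two boundaries; the $2(n-|A|)$ boundaries cut $A$ into at most $2(n-|A|)+1$ blocks, and the leftmost and rightmost blocks both consist of elements of $A$ incomparable with all of $X\setminus A$, hence form a single $N(\cdot)$-class, removing the $+1$. Your argument is easily repaired along these lines: charge each $y$ two changes (entry and exit of its interval) to the sequence $i\mapsto N(a_i)$, using the fact that $y$ cannot be simultaneously above and below $A$ so that only one of $B,U$ is affected; then merge the two extreme blocks as above. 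A side remark: the edge case $|A|=n$ that you flag is genuinely not covered by the lemma as stated (there is one class, while the bound reads $0$); it never arises in the application because the algorithm assumes a nonempty matching, but you should not claim it is handled ``trivially''.
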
   

Before proving Lemma~\ref{lem:struct2}, we show that it can be used to compute $\LE(\ccP)$ more efficiently. 
Let $A_1, \dots, A_\ell$ be the partition of $A$ defined by $N(\cdot)$, and for each $i \in [\ell]$, denote $a_i = |A_i|$. 
Let $x_i^k$, where $i \in [\ell]$ and $k \in [a_i]$, be a \emph{virtual element}, and let $Q$ denote the set of all such virtual elements.

Construct a new poset $\ccP' = (X',\prec')$ as follows. Let $X' = W \cup Q$. In words, the ground set $X'$ contains all vertices matched by $M$, and instead of the elements of the antichain $A$, it contains the virtual elements of $Q$. Observe that $|Q|=|A|$ and therefore $|X'|=|X|$.

The relation $\prec'$ is defined as follows, covering all cases:
\begin{itemize}
\item if $x,y \in W$, then $x \prec' y$ $\iff$ $x \prec y$,
\item if $x = x_i^p$ and $y = x_j^q$, then $x \prec' y$ $\iff$ $i=j$ and $p<q$,
\item if $x \in W$ and $y = x_i^p$, then $x \prec' y$ $\iff$ $x \prec z$, for some $z \in A_i$,
\item if $x = x_i^p$ and $y \in W$, then $x \prec' y$ $\iff$ $z \prec y$, for some $z \in A_i$.
\end{itemize}

In words, $\prec'$ preserves the relation $\prec$ between elements of $W$. Virtual elements with the same index $i$ form a chain $x_i^1 \prec' \cdots \prec' x_i^{a_i}$, for all $i \in [\ell]$. Virtual elements with different indices are incomparable. The relation between a virtual element $x_i^k$ and an element $y \in W$ preserves the relation $\prec$ between an arbitrary element $z \in A_i$ and $y$. The choice of $z$ is indeed arbitrary, as the elements in $A_i$ are by definition indistinguishable. 

Intuitively, $x_i^k$ is a placeholder for the element of $A_i$ that appears as the $k$-th among all elements of $A_i$ in some linear extension of $\ccP$. The sequence $x_i^1, \dots, x_i^{a_i}$ corresponds to an arbitrary permutation of the elements of $A_i$. This intuition is captured by the following statement.

\begin{lemma} With the above definitions:
$$\LE(\ccP) = \LE(\ccP') \cdot \prod_{i \in [\ell]}{(a_i!)} .$$ \label{lem:struct3}
\vspace{-0.2in}
\end{lemma}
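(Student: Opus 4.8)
The plan is to prove the identity by a bijection: I will match the linear extensions of $\ccP$ with pairs consisting of a linear extension of $\ccP'$ together with, for each block $A_i$, a bijection $[a_i] \to A_i$. Since there are $\prod_{i\in[\ell]} a_i!$ such tuples of bijections, the formula follows. Equivalently, I will define a natural ``forgetting'' map $\Phi$ from the linear extensions of $\ccP$ onto those of $\ccP'$ and show that every fiber of $\Phi$ has size exactly $\prod_{i\in[\ell]} a_i!$.

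First I would record the structural fact underlying the definition of $\prec'$: for every $i \in [\ell]$ and every $x \in W$, either $x$ is comparable (in $\prec$) with \emph{all} elements of $A_i$, or with \emph{none}; this is immediate from $N(z)=N(z')$ for all $z,z' \in A_i$. Moreover, in the former case the direction is uniform, i.e.\ either $x \prec z$ for every $z \in A_i$ or $z \prec x$ for every $z \in A_i$: if $x \prec z$ and $z' \prec x$ held for $z,z' \in A_i$, then $z' \prec z$ by transitivity, contradicting that $A$ is an antichain. This is precisely what makes $\prec'$ well-defined (the choice of $z \in A_i$ in the last two bullets is immaterial), and a short case check also confirms that $\prec'$ is irreflexive and transitive, so $\ccP'$ is indeed a poset.

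Now define $\Phi$: given a linear extension $L$ of $\ccP$, leave each element of $W$ in place and relabel the elements of each block $A_i$ as $x_i^1, \dots, x_i^{a_i}$ in their order of appearance in $L$; call the result $\Phi(L)$. To check $\Phi(L)$ is a linear extension of $\ccP'$ I would verify the clauses defining $\prec'$: relations within $W$ are preserved verbatim; $x_i^p \prec' x_i^q$ with $p<q$ holds since $x_i^p$ precedes $x_i^q$ in $\Phi(L)$ by construction; and $x \prec' x_i^p$ (resp.\ $x_i^p \prec' x$) with $x \in W$ holds because, by the uniformity fact, $x$ precedes (resp.\ follows) \emph{every} element of $A_i$ in $L$, in particular the one relabeled $x_i^p$. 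Conversely, for a linear extension $L'$ of $\ccP'$ and bijections $\sigma_i : [a_i] \to A_i$, let $\Psi(L',(\sigma_i)_i)$ be obtained from $L'$ by substituting $\sigma_i(k)$ for $x_i^k$. The mirror-image case analysis shows $\Psi(L',(\sigma_i)_i)$ is a total order on $X$ containing $\prec$ (the virtual chains $x_i^1 \prec' \cdots \prec' x_i^{a_i}$ merely impose some total order on each $A_i$, which is consistent with $\prec$ since $A_i$ is an antichain), hence a linear extension of $\ccP$.

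Finally I would verify that the two maps fit together. For a fixed linear extension $L'$ of $\ccP'$, the assignment $(\sigma_i)_i \mapsto \Psi(L',(\sigma_i)_i)$ is injective: from $L := \Psi(L',(\sigma_i)_i)$ one recovers $\sigma_i(k)$ as the $k$-th element of $A_i$ that appears in $L$, since the $x_i^k$ occur in increasing order of $k$ in $L'$. Its image is exactly $\Phi^{-1}(L')$: applying $\Phi$ to $\Psi(L',(\sigma_i)_i)$ relabels the $A_i$-elements in their order of appearance, which is the order $x_i^1,\dots,x_i^{a_i}$, so $\Phi(\Psi(L',(\sigma_i)_i)) = L'$; and any $L$ with $\Phi(L)=L'$ equals $\Psi(L',(\sigma_i)_i)$ for the $\sigma_i$ obtained by reading off the $A_i$-elements of $L$ in order. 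Thus $|\Phi^{-1}(L')| = \prod_{i\in[\ell]} a_i!$ for every linear extension $L'$ of $\ccP'$, and summing over all of them yields $\LE(\ccP) = \LE(\ccP') \cdot \prod_{i\in[\ell]} a_i!$. The only real subtlety is the uniformity-of-comparability fact from the second paragraph and its careful use in both directions of the case analysis; the rest is bookkeeping.
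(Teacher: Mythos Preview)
Your proof is correct and follows essentially the same approach as the paper: both define the ``forgetting'' map that relabels the elements of each $A_i$ by $x_i^1,\dots,x_i^{a_i}$ in their order of appearance, verify it lands in the linear extensions of $\ccP'$, and show that each fiber has size $\prod_i a_i!$ by replacing the virtual elements with arbitrary permutations of $A_i$. Your write-up is somewhat more careful in that you explicitly isolate the uniformity-of-comparability fact needed for well-definedness of $\prec'$ and formalize the inverse map $\Psi$, but the underlying argument is the same.
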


Let us postpone proving Lemma~\ref{lem:struct3} as well, and state our first algorithm, \#LE-2D, as Algorithm~\ref{alg1}. The algorithm constructs the poset $\ccP'$ and computes its number of linear extensions using recurrence (\ref{rec1}), then computes the correct count for $\ccP$ via Lemma~\ref{lem:struct3}.

\begin{algorithm}
  \caption{Algorithm~\#LE-2D}\label{alg1}
  \begin{algorithmic}[1]
    \Statex \textbf{Input:} Poset $\ccP = (X,\prec)$, where $|X| = n$.
    \Statex \textbf{Output:} The number of linear extensions $\LE(\ccP)$ of $\ccP$. 
    \State Find a maximum matching $M$ of $\C(\ccP)$ with vertex set $W$. 
    \State Let $A = X \setminus W$.
    \State Let $A_1, \dots, A_\ell$ be the partition of $A$ by the neighborhoods in $\C(\ccP)$.
    \State Let $a_i = |A_i|$ for $i \in [\ell]$.
    \State Construct $\ccP' = (X',\prec')$, as described.
    \State Compute $N = \LE(\ccP')$  using (\ref{rec1}). %
    \State \textbf{return} $N \cdot \prod_{i\in[\ell]}{(a_i!)}$.
    \end{algorithmic}
\end{algorithm}

\subparagraph*{Analysis of the running time.} 
Step 1 amounts to running a standard maximum matching algorithm (see e.g.\ \cite{Tarjan_book}). 
Computing the partition in Step 3 takes linear time with careful data structuring. Steps 2,4,5,7 clearly take linear time overall. 

The polynomial-time overhead of steps other then Step 6, as well as the polynomial factor in the analysis of (\ref{rec1}) are absorbed in the exponential running time of Step 6, where we round the base of the exponential upwards. To derive a worst-case upper bound on the running time of Step 6, it only remains to bound the number of downsets of $\ccP'$.

Observe that the ground set $X'$ can, by construction, be partitioned into chains. The matched vertices of $W$ form $m$ chains $x_i \prec' y_i$, for $i \in [m]$, as before. The virtual elements of $Q$ form $\ell$ chains of lengths $a_1, \dots, a_\ell$, where the $i$-th chain is $x_i^1 \prec' \cdots \prec' x_i^{a_i}$. All downsets of $\ccP'$ are then of the form $(T_1 \cup \dots \cup T_m) \cup (Q_1 \cup \dots \cup Q_{\ell})$, where $T_i \in \{\emptyset, \{x_i\},\{x_i,y_i\}\}$ for $i \in [m]$, and $Q_i = \{x_i^j : j \leq t_i\}$, for some threshold $0 \leq t_i \leq a_i$, for $i \in [\ell]$. %

The number of such sets is $3^m \cdot \prod_{i \in [\ell]}{(a_i+1)}$. 
Recall that $m = \upalpha n$ and $|A| = (1-2\upalpha)n$. The quantity $\prod_{i=1}^\ell{(a_i + 1)}$ is maximized when the values $a_i+1$ are all equal, and thus equal to $\frac{|A| + \ell}{\ell}$, yielding the overall upper bound $3^{\upalpha n} \left(\frac{(1-2\upalpha)n}{\ell}+1\right)^\ell$. (Observe that $\ell \geq 1$ always holds.)

Since the quantity is increasing in $\ell$, and $\ell \leq 2(n - |A|) =  4\upalpha n$ by Lemma~\ref{lem:struct2}, we obtain the upper bound $\left( 3^{\upalpha} \left(\frac{1+2\upalpha}{4\upalpha}\right)^{4\upalpha}  \right)^n$. When $0 \leq \upalpha \leq \frac{1}{6}$, the base achieves its maximum for $\upalpha = \frac{1}{6}$, at a value below $1.9064$. When $\upalpha > \frac{1}{6}$, we can switch back to the large matching case, with a bound $(2 \cdot (\frac{3}{4})^\upalpha)^n$ on the number of downsets. Since the two quantities are equal when $\upalpha=1/6$, and the second quantity is decreasing in $\upalpha$, we obtain the overall bound $O(1.9064^n)$ %
on the running time. 

To reach the result given in Theorem~\ref{thm1}, we need further ideas. Let us first prove the two lemmas from which the correctness of the current algorithm and its analysis follow. 

\begin{proof}[Proof of Lemma~\ref{lem:struct3}]
Let $q = \prod_{i=1}^\ell {({a_i}!)}$.
We describe an explicit mapping from linear extensions of $\ccP$ to linear extensions of $\ccP'$. 

Consider a linear extension $<$ of $\ccP$ viewed as a sequence $z = (z_1, \dots, z_n)$, where $z_1 < \dots < z_n$. The sequence $z$ contains $\ell$ disjoint subsequences of lengths $a_1,\dots,a_{\ell}$ formed respectively by the elements of $A_1, \dots, A_\ell$. 
Let $z' = (z'_1, \dots, z'_n)$ be the sequence obtained from $z$ by replacing, for all $i \in [\ell]$, the elements of $A_i$ in the sequence $z$, in the order of their appearance, by the virtual elements $x_i^1, \dots, x_i^{a_i}$. 

We proceed via two claims about the mapping $z \rightarrow z'$ from which the statement follows: (1) $z'$ is a linear extension of $\ccP'$, and (2) for every linear extension $z'$ of $\ccP'$ there are $q$ different linear extensions of $\ccP$ that map to $z'$. 

For (1), let $i_1, i_2$ be two arbitrary indices $1 \leq i_1 < i_2 \leq n$. We need to show that $z'_{i_2} \nprec' z'_{i_1}$. The four cases to consider are: (1a) $z'_{i_1}, z'_{i_2} \in W$, (1b) $z'_{i_1} = x_i^p$ and $z'_{i_2} = x_j^q$, (1c) $z'_{i_1} \in W$ and $z'_{i_2} = x_i^p$, and (1d) $z'_{i_1} = x_i^p$, and $z'_{i_2} \in W$. These correspond to the four cases in the definition of $\prec'$ and the claim easily follows in each case by the construction of $z'$.

For (2), consider a linear extension (sequence) $z'$ of $\ccP'$, and for all $i \in [\ell]$ replace the elements $\{x_i^1,\dots,x_i^{a_i}\}$ in $z'$ by an arbitrary permutation of the elements of $A_i$. In this way we obtain $q$ different linear extensions of $\ccP$, and when applying the above mapping to these linear extensions, they all yield the same $z'$. 
\end{proof}

\begin{proof}[Proof of Lemma~\ref{lem:struct2}]
Let $t = |A|$, and let us label the elements of $A$ as $z_1, \dots, z_t$. Let $<_1$ and $<_2$ be the realizers of the two-dimensional poset $\ccP$. Then, as $A$ is an antichain, its elements can be labeled such that $z_1 <_1 \cdots <_1 z_t$, and $z_t <_2 \dots <_2 z_1$. 
The crucial observation is that the neighborhood of an arbitrary $y \in X \setminus A$ in $A$ is defined by an \emph{interval} of indices. 

Formally, for $y \in X \setminus A$ that is above or below $A$, let $z_i, z_j$ be the elements of $N(y) \cap A$ with smallest, resp.\ largest index (it may happen that $i=j$). Define $b(y) = i-0.5$ and $b'(y) = j+0.5$ the \emph{boundaries} of the neighborhood of $y$. If $y$ is incomparable with $A$, set the boundaries to dummy values $b(y) = 0$, $b'(y) = t+1$.

If $y$ is above $A$, then for all $k$ such that $b(y) < k  < b'(y)$, we have $z_k \prec y$. To see this, observe that $z_k <_1 z_j <_1 y$, and $z_k <_2 z_i <_2 y$.

Symmetrically, if $y$ is below $A$, then for all $k$ such that $b(y) < k  < b'(y)$, we have $y \prec z_k$. To see this, observe that $y <_1 z_i <_1 z_k$, and $y <_2 z_j <_2 z_k$.

Let $b_1, \dots, b_{2(n-t)}$ be the multiset of neighborhood boundaries sorted in increasing order. Their number is $2(n-t)$ as each of the $n-t$ elements of $X \setminus A$ contribute exactly two boundaries. Let us add the two dummy boundaries $b_0 = 0$ and $b_{2(n-t)+1} = t+1$ (in case they never occurred during the process).

The classes of $A$ defined by the partition $N(\cdot)$ are then 
of the form $\{z_j : b_{i} < j < b_{i+1}\}$  where $0 \leq i \leq 2(n-t)$. There are at most $2(n-t)+1$ such classes (not all boundaries are necessarily distinct, and we can now remove empty classes due to duplicate boundaries). Moreover, the two classes delimited by $b_0$ to the left, respectively by $b_{2(n-t)+1}$ to the right are identical, corresponding to elements of $A$ incomparable to all $y \in X \setminus A$. 
The claimed bound on the maximum number of classes follows. 
\end{proof}

\subsection{A faster algorithm}
We now describe the improvements to Algorithm~\#LE-2D and its analysis that lead to the running time claimed in Theorem~\ref{thm1}.

\subparagraph*{Canonical matchings.}

Observe that set $A$ in Lemma~\ref{lem:struct2} denotes an arbitrary antichain. When $A$ is assumed to be the complement of a maximum matching with a certain property, a stronger statement can be made about the neighborhood-partition of $A$. 

Let $M$ be a maximum matching of $\C(\ccP)$, let $W$ be its vertex set, and let $A = X \setminus W$.  
We call an edge $\{x_i,y_i\}$ of $M$ \emph{separated}, if there exist $x_1,x_2 \in A$ such that $x_i \prec x_1$ and $x_2 \prec y_i$. (In other words, $x_i$ is below $A$, and $y_i$ is above $A$.) Observe that, in this case, $x_1$ and $x_2$ must be the same, as otherwise $M$ could be made larger by replacing edge $\{x_i,x_j\}$ by the two edges $\{x_i,x_1\}$, $\{x_2,x_j\}$. A matching is \emph{canonical} if it contains no separated edges. 

We argue that in an arbitrary poset a canonical matching of the same size as the maximum matching can be found in polynomial time. Indeed, start with an arbitrary maximum matching $M$.
If $M$ contains no separated edges, we are done. Otherwise, let $\{x_i, y_i\}$ be an edge of $M$ with $x \in A$ such that $x_i \prec x \prec y_i$. (Such a triplet can easily be found in polynomial time.) 
Replace the edge $\{x_i, y_i\}$ in $M$ by the edge $\{x,y_i\}$. As $x$ was previously not matched, the resulting set of edges is still a maximum matching. We claim that with $O(n^2)$ such swaps we obtain a canonical matching (i.e.\ one without separated edges). To see this, consider as potential function the sum of ranks of all vertices in the current matching, according to an arbitrary fixed linear extension of $\ccP$. Each swap increases the potential by at least one (since $x$ must come after $x_i$ in every linear extension). Since the sum of ranks is an integer in $O(n^2)$, the number of swaps until we are done is also in $O(n^2)$. In the following, we can therefore assume that $M$ is a canonical maximum matching. %
We can now state the stronger structural lemma. 

\begin{lemma}\label{lem:struct4}
Let $\ccP = (X,\prec)$ be a size-$n$ poset, with $\dim(\ccP) \leq 2$. Let $M$ be a canonical maximum matching in $\C(\ccP)$ with vertex set $W$, and let $A = X \setminus W$. Then, $N(\cdot)$ partitions $A$ into at most $|W|$ classes. 
\end{lemma}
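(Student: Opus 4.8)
The plan is to reuse the interval structure from the proof of Lemma~\ref{lem:struct2} and to improve its bound by a factor of two, by pairing neighbourhood boundaries along the edges of $M$.

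First I would set up exactly as in the proof of Lemma~\ref{lem:struct2}: label the elements of $A$ so that $z_1 <_1 \cdots <_1 z_t$, recall that for every $y \in W$ comparable with $A$ the set $N(y) \cap A$ is an interval $\{z_i,\dots,z_j\}$ of this chain, with boundaries $b(y) = i - 0.5$ and $b'(y) = j + 0.5$, and that for $y$ incomparable with $A$ one sets $b(y) = 0$, $b'(y) = t+1$. As shown there, the classes of $N(\cdot)$ are the maximal runs of consecutive $z$'s delimited by the sorted list of all boundary values, and the two extreme runs coincide (both equal the set of elements of $A$ incomparable with all of $W$). Hence, if the elements of $W$ comparable with $A$ produce at most $|W|$ \emph{distinct} boundary values, then — together with the two dummies $0$ and $t+1$ — there are at most $|W|+2$ distinct boundary values, hence at most $|W|+1$ runs, hence at most $|W|$ distinct classes, and we are done. (If no element of $W$ is comparable with $A$ the partition has a single class, which is fine since $|W| \geq 1$ in our setting.)

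To bound the number of distinct boundary values I would show that each edge $\{x_i, y_i\}$ of $M$, with $x_i \prec y_i$, contributes at most two of them. If one of $x_i, y_i$ is incomparable with $A$, only the other contributes its (at most) two boundaries, so assume both are comparable with $A$. Since $x_i \prec y_i$, the situation ``$x_i$ above $A$ and $y_i$ below $A$'' is impossible, as it would force $z \prec z'$ for some $z, z' \in A$; and the situation ``$x_i$ below $A$ and $y_i$ above $A$'' is exactly that of a separated edge, which does not occur because $M$ is canonical. Therefore $x_i$ and $y_i$ are either both below $A$ or both above $A$. In the first case (the second being dual), $x_i \prec y_i$ gives $N(y_i) \cap A \subseteq N(x_i) \cap A$, with both nonempty. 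If this inclusion were strict, choosing $z^* \in (N(x_i) \cap A) \setminus (N(y_i) \cap A)$ and any $z^{**} \in N(y_i) \cap A$ would give four distinct vertices forming an augmenting path $z^* - x_i - y_i - z^{**}$ for $M$: the edges $\{z^*, x_i\}$ and $\{y_i, z^{**}\}$ lie in $\C(\ccP)$ but not in $M$ (as $z^*, z^{**} \in A$ are unmatched and $x_i, y_i$ are matched to each other), the edge $\{x_i, y_i\}$ lies in $M$, and both endpoints $z^*, z^{**}$ are unmatched — contradicting the maximality of $M$. Hence $N(x_i) \cap A = N(y_i) \cap A$, so $b(x_i) = b(y_i)$ and $b'(x_i) = b'(y_i)$, and the edge contributes only the two values $b(x_i), b'(x_i)$. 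Summing over the $|M|$ edges of $M$, which partition $W$, yields at most $2|M| = |W|$ distinct boundary values, as required.

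The step I expect to be the main obstacle is the exchange argument in the ``both below $A$'' case: one has to check carefully that $z^*, z^{**}, x_i, y_i$ are genuinely distinct, that the three edges of the path have the claimed membership in $M$, and that canonicity is applied correctly to discard the separated case. The remainder — the reduction to counting distinct boundaries, and the bookkeeping with the dummy boundaries and the coinciding extreme runs that brings the final count down to exactly $|W|$ — is routine once the structure of Lemma~\ref{lem:struct2} is in hand.
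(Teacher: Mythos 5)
Your proof is correct and follows essentially the same route as the paper: use canonicity to rule out the separated configuration, conclude that each matched edge has both endpoints on the same side of $A$ (or one incomparable), and then use an augmenting-path exchange to show the two endpoints have identical neighbourhoods in $A$, so each edge contributes only two interval boundaries in the counting scheme of Lemma~\ref{lem:struct2}. The only cosmetic difference is that the paper additionally notes $|N(x_i)\cap A|=1$ in the both-above/both-below cases, which your argument does not need.
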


\begin{proof}
Since $M$ is canonical, for every edge $\{x_i,y_i\}$, one of the following must hold:
\begin{enumerate}[(i)]
\item $x_i$ and $y_i$ are both above $A$,
\item $x_i$ and $y_i$ are both below $A$,
\item $x_i$ is incomparable with $A$ and $y_i$ is above $A$,
\item $y_i$ is incomparable with $A$ and $x_i$ is below $A$.
\end{enumerate}

Recall that in the proof of Lemma~\ref{lem:struct2}, we considered, for all $y \in X \setminus A$, the two boundaries of the interval $N(y) \cap A$. Now, in cases (iii) and (iv), only one of $x_i$ and $y_i$ need to be considered, as the neighborhood of the other is disjoint from $A$.

In cases (i) and (ii), it is also sufficient to consider only one of $x_i$ and $y_i$ as we have $N(x_i) \cap A = N(y_i) \cap A$. Furthermore, in this case $|N(x_i) \cap A| = 1$. To see this, suppose that there are $z,z' \in A$ such that $z \in N(x_i)$ and $z' \in N(y_i)$. Then $M$ could be extended by replacing the edge $\{x_i,y_i\}$ with the edges $\{z,x_i\}$ and $\{z',y_i\}$, contradicting the maximality of $M$. 

It follows that in the argument of Lemma~\ref{lem:struct2} we only need to consider the intervals created by $|M|$ elements of $X \setminus A$, yielding the bound $2|M| = |W|$ on the number of classes. It is easy to construct examples where the bound is tight.
\end{proof}

It follows that, if we require the matching $M$ in Step~1 of Algorithm~\#LE-2D to be canonical, then by Lemma~\ref{lem:struct3}, the bound on the number of downsets reduces to 
$\left( 3^{\upalpha} \left(\frac{1}{2\upalpha}\right)^{2\upalpha}  \right)^n$. When $0 \leq \upalpha \leq \frac{1}{4}$, the base achieves its maximum for $\upalpha = \frac{1}{4}$, at a value below $1.8613$. When $\upalpha > \frac{1}{4}$, we can switch back to the large matching case, with a bound $(2 \cdot (\frac{3}{4})^\upalpha)^n$. Since the two quantities are equal when $\upalpha=1/4$, and the second quantity is decreasing in $\upalpha$, we obtain the overall bound $O(1.8613^n)$ %
on the running time.

\subparagraph*{Packing triplets and quartets.} 
The final improvement in running time comes from the attempt to find, instead of a matching (i.e.\ a packing of edges), a packing of larger connected structures. %
Beyond the concrete improvement, the technique may be of more general applicability and interest, which we illustrate in §\,\ref{sec3} for the jump number problem.

Assume, as before, that $M$ is a canonical maximum matching of $\C(\ccP)$ of size $\upalpha n$ with vertex set $W$ and that $A$ denotes the antichain $X \setminus W$. %
Let us form an auxiliary bipartite graph $B$ with vertex sets $L$ and $R$, where $L = A$, and $R = M$, i.e.\ $R$ consists of the \emph{edges} of $M$. A vertex $x \in L$ is connected to a vertex $\{x_i,y_i\} \in R$ exactly if $x$ is comparable to one or both of $x_i$ and $y_i$. Let $M_B$ be a maximum matching of $B$, of size $\upbeta n$. Clearly, $\upbeta \leq \upalpha$.

Edges of $M_B$ connect vertices in $A$ to matched edges of $M$, forming \emph{triplets} of vertices of $X$ that induce connected subgraphs in $\C(\ccP)$. Let $T$ denote the set of all triplets created by edges of $M_B$. 

Let us form now another auxiliary bipartite graph $B'$ with vertex sets $L'$ and $R'$, where $L'$ consist of the vertices of $A$ \emph{unmatched} in $M_B$, and $R' = T$, i.e.\ the triplets found in the previous round. A vertex $x \in L'$ is connected to a vertex $z \in R'$ exactly if $x$ is comparable to at least one of the vertices forming the triplet $z$. Let $M_{B'}$ be a maximum matching of $B'$, and denote its size by $\upgamma n$. Clearly, $\upgamma \leq \upbeta$. 

Edges of $M_{B'}$ connect vertices in $A$ to triplets of $T$, forming \emph{quartets} of vertices of $X$ that induce connected subgraphs in $\C(\ccP)$. Let $Q$ denote the set of all quartets created by edges of $M_{B'}$.

Let $A'$ denote the vertices of $A$ that were not matched in either of the two matching rounds. Observe that $|A'| = n(1-2\upalpha - \upbeta - \upgamma)$. We make the following observations.

(1) The endpoints of edges of $M$ that were \emph{unmatched} in $M_B$ are not comparable to any vertex in $A'$, as otherwise $M_B$ would not have been maximal. %
There are ${n(\upalpha-\upbeta)}$ such unmatched edges. These contribute a factor of $3^{n(\upalpha-\upbeta)}$ to the number of downsets. 

(2) The vertices in triplets of $T$ that were \emph{unmatched} in $M_{B'}$ are not comparable to any vertex in $A'$, as otherwise $M_{B'}$ would not have been maximal. There are $n(\upbeta - \upgamma)$ such triplets. A simple case-analysis shows that the number of downsets of a size-$3$ poset with connected comparability graph is at most $5$. It follows that these triplets contribute a factor of at most $5^{n(\upbeta-\upgamma)}$ to the number of downsets.

(3) There are $\upgamma n$ quartets in $Q$. A case-analysis\footnote{An easy induction shows more generally that the maximum number of downsets of a size-$n$ poset with connected comparability graph is $2^{n-1}+1$.} shows that the number of downsets of a size-$4$ poset with connected comparability graph is at most $9$. It follows that these quartets contribute a factor of at most $9^{n \upgamma}$ to the number of downsets.

(4) All vertices in $X \setminus A'$ are accounted for. As for the vertices in $A'$, we partition them into classes $A_1, \dots, A_\ell$ by $N(\cdot)$, and apply the same transformation as previously, creating a new poset $\ccP'$. By the previous discussion, only the vertices from the quartets in $Q$ may be comparable to vertices in $A'$. Furthermore, in each quartet, only the vertices coming from the original matching $M$ may be comparable to a vertex in $A'$ (other vertices come from the antichain $A \supseteq A'$). Thus, by Lemma~\ref{lem:struct4}, the number of classes created on $A'$ is $\ell \leq 2\upgamma n$.

Putting everything together, assuming $\upgamma >0$ (the case $\upgamma=0$ is discussed later),
we obtain the upper bound $\tau^n$ on the number of downsets of $\ccP'$, where 
$$\tau = \tau(\upalpha,\upbeta,\upgamma) = 3^{(\upalpha-\upbeta)} \cdot 5^{(\upbeta-\upgamma)} \cdot 9^{\upgamma} \cdot \left(\frac{1-2\upalpha-\upbeta-\upgamma}{2\upgamma} + 1\right)^{2\upgamma}.$$

As an alternative, consider the number of downsets of the original poset $\ccP$. We account for the factors contributed by edges of $M$ unmatched in $M_B$, triplets in $T$ unmatched in $M_{B'}$ and quartets in $T$. For $A'$, however, we consider all possible subsets (not partitioning $A'$ into classes). We obtain the upper bound $\pi^n$, where $$\pi = \pi(\upalpha,\upbeta,\upgamma) = 3^{(\upalpha-\upbeta)} \cdot 5^{(\upbeta-\upgamma)} \cdot 9^{\upgamma} \cdot 2^{1-2\upalpha - \upbeta - \upgamma} = 2 \cdot \left( \frac{3}{4} \right)^{\upalpha} \cdot \left( \frac{5}{6} \right)^{\upbeta} \cdot \left( \frac{9}{10} \right)^{\upgamma}.$$ 

We distinguish two cases:

\noindent\textbf{(i)} $2\upgamma \geq 1-2\upalpha - \upbeta - \upgamma$. Then, $\pi \leq \tau$, and it is advantageous to use the original poset $\ccP$. 

\noindent\textbf{(ii)} $2\upgamma \leq 1-2\upalpha - \upbeta - \upgamma$. Then $\pi \geq \tau$, and it is advantageous to use the transformed $\ccP'$. 

In case of equality, we choose arbitrarily between the two cases. Intuitively, the inequality of case (i) means that $A'$ may even be partitioned into $|A'|$ classes, in which case we do not gain from partitioning at all. In both cases we have the additional constraints $0 < \upgamma \leq \upbeta \leq \upalpha < \frac{1}{3}$, and $1-2\upalpha-\upbeta-\upgamma \geq 0$.

Under these constraints we obtain that $\pi < 1.8172$ in case (i), respectively, $\tau < 1.8172$ in case (ii). The maximum is attained in both cases for $\upalpha = \upbeta = \upgamma = \frac{1}{6}$. Observe that the least favorable situation is when all edges of $M$ are matched into triplets, and all triplets are matched into quartets. 

When $\upgamma = 0$, no quartets are created, and $A'$ forms a single class, transformed in $\ccP'$ into a single chain, contributing a linear factor to the overall bound. Thus, the upper bound $n \cdot \tau^n$ holds, with $\tau = \tau(\upalpha,\upbeta) = 3^{(\upalpha-\upbeta)} \cdot 5^\upbeta < 1.71$, maximum attained for $\upalpha = \upbeta (\approx 1/3)$.  

The resulting algorithm~\#LE-2D$^{\ast}$ is listed as Algorithm~\ref{alg2}. The correctness and running time bounds (Theorem~\ref{thm1}) follow from the previous discussion. We defer some remarks about the algorithm and its analysis to §\,\ref{secrem}.

\subparagraph*{Open questions.} The following questions about counting linear extensions are suggested in increasing order of difficulty. (1) Can \#LE be solved in two-dimensional posets faster than the algorithm of Theorem~\ref{thm1}? (2) Can \#LE be solved in time $O(c^n)$ for $c<2$ in $d$-dimensional posets, for $d \geq 3$? (3) Can \#LE be solved in time $O(c^n)$ for $c < 2$ in arbitrary posets?

\newpage

\section{The jump number problem}
\label{sec3}

In this section we present our improvement for the jump number problem. We start with a formal definition of the problem, and the straightforward dynamic programming. We then review the algorithm of Kratsch and Kratsch, followed by our extension. The result is intended as an illustration of the matching technique of §\,\ref{sec2}, which is not specific to two-dimensional posets.

Given a linear extension $x_1 < \dots <x_n$ of a poset $\ccP = (X,\prec)$, a pair of neighbors $(x_i$, $x_{i+1})$ is a \emph{jump} if $x_i \nprec x_{i+1}$, and is a \emph{bump} if $x_i \prec x_{i+1}$. The number of jumps, resp.\ bumps of the linear extension $<$ of $\ccP$ is denoted as $\jump(<)$, resp.\ $\bump(<)$. The jump number problem asks to compute the minimum possible value $\jump(<)$ for a linear extension $<$ of $\ccP$. Additionally, a linear extension realizing this value should be constructed. In the algorithms we describe, obtaining a linear extension that realizes the minimum jump number is a mere technicality, we thus focus only on computing the minimum jump number. 

An easy observation is that the relation $\jump(<) + \bump(<) = n-1$ holds for all linear extensions $<$ of $\ccP$. Minimizing the number of jumps is thus equivalent to maximizing the number of bumps, allowing us to focus on the latter problem. 

Let $\bump(\ccP)$ denote the maximum bump number of a linear extension of $\ccP$. For all $Y \subseteq X$, and $x \in \max{(Y)}$, let $\bump(Y,x)$ denote the maximum bump number of a linear extension of the subposet of $\ccP$ induced by $Y$ that ends with element $x$. Let us define $\bump(\{x\},x) = 0$, for all $x \in X$. We recursively express $\bump(Y,x)$ by removing $x$ from the end and trying all remaining elements in turn as the new last element: 
\begin{align}
\bump(Y,x) = \max_{y \in \max{(Y \setminus \{x\})}} {\Bigl(  \bump(Y \setminus \{x\}, y) + \left[y \prec x\right] \Bigr)}.\label{rec2}
\end{align}

The term $[y \prec x]$ denotes the value $1$ if $y \prec x$, i.e.\ if the last pair forms a bump, and $0$ otherwise. Executing recurrence (\ref{rec2}) na\"ively leads to an algorithm that computes $\bump(\ccP)$ in time $O(2^n \cdot n^2)$.

We now describe the improvement of Kratsch and Kratsch~\cite{jumpnr}. Observe that jumps partition a linear extension of $\ccP$ uniquely into a sequence of chains of $\ccP$, such that the last element of each chain is incomparable with the first element of the next chain, and all other neighboring pairs are comparable.

Consider a linear extension with minimum jump number and let $C_1, \dots, C_k$ denote the non-trivial chains of its decomposition (i.e.\ all chains of length at least $2$). Let $C$ denote the set of vertices of chains $C_1, \dots, C_k$. Then, as all bumps occur between elements of $C$, the bump number of $\ccP$ equals the bump number of the subposet induced by $C$. In other words, to compute the maximum bump number, it is sufficient to consider in recurrence (\ref{rec2}) the subsets of the ground set $X$ that are candidate sets $C$ in the optimum.

Kratsch and Kratsch consider a maximum matching $M$ of $\C(\ccP)$ with vertex set $W$ and observe that the vertices of the antichain $A = X \setminus W$ that participate in nontrivial chains (i.e.\ that are in $C$) form a matching with vertices of $W$. (This is because a vertex $v \in A$ can only form a bump together with a vertex from $X \setminus A$, and two vertices $v,v' \in A$ cannot form bumps with the same vertex, as that would contradict their incomparability.) Moreover, $v,v' \in A$ cannot be the neighbors of the two endpoints of a matched edge of $M$, as that would contradict the maximality of $M$.

Thus, it suffices to compute (\ref{rec2}) over subsets of $X$ that consist of $W \cup A'$, where $A' \subseteq A$ and $|A'| \leq |M|$. Furthermore, only \emph{downsets} of $\ccP$ need to be considered, leading to a further saving due to the fact that $W$ forms a matching. Denoting $|M| = \upalpha n$, the overall number of subsets of $X$ that need to be considered is ${(1-2\upalpha)n \choose {\leq \upalpha n}} \cdot 3^{\upalpha n}$. %

\subparagraph*{Packing triplets.} 

We now describe our improvement. Again, let $M$ be a canonical maximum matching of $\C(\ccP)$ of size $\upalpha n$ with vertex set $W$ and let $A$ denote the antichain $X \setminus W$. Form an auxiliary bipartite graph $B$ with vertex sets $L$ and $R$, where $L = A$, and $R = M$. A vertex $x \in L$  is connected to a vertex $\{x_i,y_i\} \in R$ (i.e.\ an edge of $M$) exactly if $x \prec y_i$ or $x_i \prec x$. Let $M_B$ be a maximum matching of $B$, and denote its size by $\upbeta n$. Clearly, $\upbeta \leq \upalpha$.

Edges of $M_B$ connect vertices in $A$ to matched edges of $M$, forming \emph{triplets} of vertices of $X$ that induce connected subgraphs in $\C(\ccP)$. Let $T$ denote the set of all such triplets. (To keep the argument simple we forgo in this case further rounds of matching and the forming of quartets. The result is thus not optimized to the fullest extent.) 

Let $A'$ denote the vertices of $A$ that were not matched in $M_B$. Observe that $|A'| = n(1-2\upalpha - \upbeta)$. We make the following observations.

(1) The endpoints of edges of $M$ that were \emph{unmatched} in $M_B$ are not comparable to any vertex in $A'$ (assuming that $A'$ is nonempty), as otherwise $M_B$ would not have been maximal. %
There are ${n(\upalpha-\upbeta)}$ such edges. These edges contribute a factor of $3^{n(\upalpha-\upbeta)}$ to the number of downsets. 

(2) There are $\upbeta n$ triplets in $T$. These contribute a factor of at most $5^{n \upbeta}$ to the number of downsets.

(3) All vertices in $X \setminus A'$ are accounted for. Vertices of $A'$ that participate in non-trivial chains of the optimal linear extension can be matched to vertices in the triplets of $T$. A vertex $v \in A'$ can only be connected to those vertices of a triplet in $T$ that are endpoints of an edge in $M$ (all other vertices come from $A$, are thus incomparable with $v$). Furthermore, $v,v' \in A'$ may not connect to different endpoints of the same edge in $M$, as that would contradict the maximality of $M$. It follows that each vertex in $A'$ that participates in $C$ must be matched to a unique triplet in $T$. Thus, at most $\upbeta n$ vertices of $A'$ need to be considered.

The resulting Algorithm~JN is listed as Algorithm~\ref{alg3}. Its correctness follows from the previous discussion.

\subparagraph*{Running time.}
In the large matching ($\upalpha \geq \frac{1}{3}$) case, the bound given in §\,\ref{sec2} on the number of downsets holds, and the running time is within the bound of Theorem~\ref{thm3}. We assume therefore that $\upalpha < \frac{1}{3}$.

In the special case $\upbeta = 0$ only vertices in $M$ need to be considered, with an overall upper bound $3^{\upalpha n}$ on the number of downsets. For $\upalpha < \frac{1}{3}$, this quantity is below $1.443^n$.

Assuming $\upbeta > 0$, %
we have an upper bound $\tau^n$ on the number of downsets of $\ccP$, where 
$\tau^n = 3^{(\upalpha-\upbeta)n} \cdot 5^{\upbeta n} \cdot {{n(1-2\upalpha-\upbeta)} \choose {\leq \upbeta n}}$.
To obtain a simpler expression, we use a standard upper bound~\cite[p.~406]{cover2006} on the sum of binomial coefficients. Assuming $0 \leq 2b \leq a \leq 1$, we have 
${n a \choose {\leq n b}} = \sum_{k=0}^{n b}{{n a} \choose  k} \leq  n^{O(1)} \cdot \left(\frac{a^a}{b^b \cdot (a-b)^{(a-b)}}\right)^n$.

Plugging in $a = 1-2\upalpha -\upbeta$ and $b = \upbeta$, and assuming $2b \leq a$, we have $2\upalpha + 3\upbeta \leq 1$. Omitting the polynomial factor, we obtain $\tau \leq 3^{(\upalpha-\upbeta)} \cdot 5^{\upbeta} \cdot \frac{(1-2\upalpha-\upbeta)^{(1-2\upalpha-\upbeta)}}{\upbeta^\upbeta \cdot (1-2\upalpha-2\upbeta)^{(1-2\upalpha-2\upbeta)}}$. In the critical region $0 < \upbeta \leq \upalpha < \frac{1}{3}$ we obtain the bound $\tau < 1.824$, with the maximum attained for $\upalpha = \upbeta  (\approx 0.1918)$. %

When $2\upalpha + 3\upbeta \geq 1$, we use the easier upper bound on the sum of binomial coefficients ${n a \choose {\leq n b}} \leq 2^{n a}$, obtaining $\tau \leq 3^{(\upalpha-\upbeta)} \cdot 5^{\upbeta} \cdot 2^{(1-2\upalpha-\upbeta)}$. In the allowed range $0 < \upbeta \leq \upalpha < \frac{1}{3}$ and additionally requiring $2\upalpha + 3\upbeta \geq 1$, the quantity is maximized for $\upalpha = \upbeta = 0.2$, yielding $\tau < 1.8206$, within the required bounds.

\section{Discussion} \label{secrem}

We start with some remarks about algorithm \#LE-2D$^\ast$. It is straightforward to extend this algorithm beyond pairs, triplets, and quartets, to also form $k$-tuples for $k > 4$ via further matching rounds. A similar analysis, however, indicates no further improvements in the upper bound. When forming triplets and quartets, other strategies are also possible. For instance, we may try to combine connected pairs of edges from $M$ into quartets. The quartets formed in this way are, in fact, preferable to those obtained by augmenting triplets, as their number of downsets is strictly less than the value $9$ given before. (The value $9$ is attained when $3$ of the $4$ vertices form an antichain, which is not possible if the quartet consists of two matched edges.)  

We observe that in some instances, the largest antichain may be significantly larger than the antichain $A$ obtained as the complement of the maximum matching. One can find the largest antichain in time $O(n^{5/2})$ via a reduction to bipartite matching (see e.g.\ \cite{Tarjan_book}). In these cases, using the partition of the largest antichain may lead to a better running time, even if its complement does not admit a perfect matching. %
In two-dimensional posets with a realization $<_1, <_2$, the largest antichain can be found in time $O(n\log{n})$ by reduction to the \emph{largest decreasing subsequence} problem. In our analysis, we assumed the classes $A_i$ to be of equal size. The running time can, of course, be significantly lower when the distribution of class sizes is far from uniform.

We further remark that the actual time and space requirement of our algorithms is dominated by the number of downsets of a given poset $\ccP$ (or the transformed poset $\ccP'$).  The number of downsets (order ideals) is known to equal the number of antichains~\cite[§\,3]{Stanley_book}. Counting antichains is, in general, $\#\P$-hard~\cite{ProvanBall}, but solvable in two-dimensional posets in polynomial time~\cite{Steiner93, Mohring}. 
Thus, assuming that the transformed poset $\ccP'$ is also two-dimensional, we can efficiently compute a precise, \emph{instance-specific} estimate of the time and space requirements of our algorithms. 

To see that indeed, $\dim(\ccP') \leq 2$, recall that $\ccP'$ is obtained from $\ccP$ by replacing antichains $A_i$ by chains of equal size, such as to preserve the comparability of the involved elements to elements in $X \setminus A$. We can obtain a two-dimensional embedding of $\ccP'$ by starting with a two-dimensional embedding of $\ccP$, with points having integer coordinates, and no two points aligned on either coordinate. For an arbitrary $x \in A_i$, form a $0.5 \times 0.5$ box around the point $x$, and place the chain replacing $A_i$ on the main diagonal of this box. Then the comparability of points in the chain with elements in $X \setminus A_i$ is the same as for the point $x$.

\subparagraph*{Optimization.} Our first two bounds in §\,\ref{sec2} depend only on the fraction $\upalpha$ of matched vertices, and their maxima are found using standard calculus. The final bounds given in Theorem~\ref{thm1} and Theorem~\ref{thm3} however, require us to optimize over unwieldy multivariate quantities with constrained variables. The given numerical bounds were obtained using Wolfram Mathematica software. We have, however, independently certified the bounds, by the method illustrated next.

Let $\tau = \tau(\upalpha,\upbeta,\upgamma) = 3^{\upalpha} \cdot \left(\frac{5}{3}\right)^{\upbeta} \cdot {\left( \frac{9}{5} \right)}^{\upgamma} \cdot \left(\frac{1-2\upalpha-\upbeta-\upgamma}{2\upgamma} + 1\right)^{2\upgamma}$, and some value $\bar{\tau}>0$. Suppose we want to show that $\tau < \bar{\tau}$ for all $A \leq \upgamma \leq \upbeta \leq \upalpha \leq B$, where $A,B \in (0,\frac{1}{3})$, and $2\upalpha + \upbeta + \upgamma \leq 1$.

Consider a box $\mathcal{B} = [\upalpha_1, \upalpha_2] \times [\upbeta_1, \upbeta_2] \times [\upgamma_1,\upgamma_2] \subseteq [A,B]^3$. Then, at an \emph{arbitrary} point $(\upalpha, \upbeta, \upgamma) \in \mathcal{B}$, the following upper bound holds.
$$\tau(\upalpha,\upbeta,\upgamma) \leq 3^{\upalpha_2} \cdot {\left( \frac{5}{3} \right) }^{\upbeta_2} \cdot {\left( \frac{9}{5} \right)}^{\upgamma_2} \cdot \left(\frac{1-2\upalpha_1-\upbeta_1-\upgamma_1}{2\upgamma_2} + 1\right)^{2\upgamma_2}. $$

To show $\tau < \bar{\tau}$, it is sufficient to exhibit a collection of boxes, such that (1) for all boxes, the stated upper bound evaluates to a value smaller than $\bar{\tau}$, and (2) the union of the boxes covers the entire domain of the variables. We can find such a collection of boxes if we start with a single box that contains the entire domain of the variables, and recursively split boxes into two equal parts (along the longest side) whenever the upper bound evaluates to a value larger than the required value.

\subparagraph*{Higher dimensions.}
A straightforward extension of Algorithms~\#LE-2D and \#LE-2D$^{\ast}$ to higher dimensional posets does not yield improvements over the na\"ive dynamic programming. The crux of the argument in two dimensions is that a large antichain in $\ccP$ is split, according to the neighborhoods in $\C(\ccP)$ into a small number of classes. In dimensions three and above it is easy to construct posets with an antichain containing \emph{almost all} elements, e.g.\ such that $|X \setminus A| = O(\sqrt{n})$, with the property that all elements of $A$ have unique neighborhoods. In this case, the number of classes is $|A|$ and the described techniques yield no significant savings.

\newpage

\begin{algorithm}
  \caption{Algorithm~\#LE-2D$^{\ast}$}\label{alg2}
  \begin{algorithmic}[1]
    \Statex \textbf{Input:} Poset $\ccP = (X,\prec)$, where $|X| = n$.
    \Statex \textbf{Output:} The number of linear extensions $\LE(\ccP)$ of $\ccP$. 
    \State Find a maximum matching $M$ of $\C(\ccP)$ with vertex set $W$. 
    \State Let $A = X \setminus W$.
    \State Find $T$ and $Q$ as described, and let $A'$ be the unmatched part of $A$.
    \State Let $A_1, \dots, A_\ell$ be the partition of $A'$ by the neighborhoods in $\C(\ccP)$.
    \State Let $a_i = |A_i|$ for $i \in [\ell]$.
    \State Construct $\ccP' = (X',\prec')$.
    \State Compute $N = \LE(\ccP')$  using (\ref{rec1}). %
\State \textbf{return} $N \cdot \prod_{i\in[\ell]}{(a_i!)}$.
    \end{algorithmic}
\end{algorithm}

\begin{algorithm}
  \caption{Algorithm~JN}\label{alg3}
  \begin{algorithmic}[1]
    \Statex \textbf{Input:} Poset $\ccP = (X,\prec)$, where $|X| = n$.
    \Statex \textbf{Output:} The minimum jump number of a linear extension of $\ccP$. 
    \State Find a maximum matching $M$ of $\C(\ccP)$ with vertex set $W$. 
    \State Let $A = X \setminus W$.
    \State Find $T$ as described, and let $A'$ be the unmatched part of $A$.
    \State Let $\upbeta = |T|$.
    \State Compute $B = \bump(\ccP)$ by (\ref{rec2}), using downsets of $\ccP$ with at most $\upbeta n$ vertices from $A'$.  %
    \State \textbf{return} $n-1- B$.
    \end{algorithmic}
\end{algorithm}

\clearpage
\bibliography{submission}

\end{document}